\newtheorem{theorem}{Theorem}
\newtheorem{corollary}[theorem]{Corollary}
\theoremstyle{definition}
\newtheorem{definition}{Definition}
\newcommand{\matindex}[1]{\mbox{\scriptsize#1}}
\DeclareMathOperator{\transpose}{T}
\DeclareMathOperator{\e}{\mathbf e}
\DeclareMathOperator{\MV}{MV}
\DeclareMathOperator{\bin}{bin}
\DeclareMathOperator{\DI}{DI}
\DeclareMathOperator{\RI}{RI}
\DeclareMathOperator{\rank}{rank}
\newcommand{\ip}[1]{\langle #1 \rangle}
\newcommand{\clX}{\mathcal X}
\newcommand{\clY}{\mathcal Y}
\newcommand{\clU}{\mathcal U}
\newcommand{\clM}{\mathcal M}
\newcommand{\vx}{\vec\vecx}
\newcommand{\vy}{\vec\vecy}
\newif\ifnotes
\newcommand{\notes}[1]{#1}
\newcommand{\notes}[1]{}
\newcommand{\hide}[1]{}
\newcommand{\cX}{\mathcal{X}}
\newcommand{\cY}{\mathcal{Y}}
\newcommand{\Z}{\mathbb{Z}}
\newcommand{\vecx}{\mathbf x}
\newcommand{\vecy}{\mathbf y}
\newcommand{\zo}{\{0,1\}}
\providecommand{\fn}[1]{\ifmmode\operatorname{\mbox{\textnormal{#1}}}\else\mbox{\textnormal{#1}}\fi}
\newcommand{\sffn}[1]{\ifmmode\operatorname{\mbox{\textnormal{\textsf{#1}}}}\else\mbox{\textnormal{\textsf{#1}}}\fi}
\providecommand{\qedhere}{
\ifmmode
  \eqno \def\@badmath{$$}
    \let\eqno\relax \let\leqno\relax \let\veqno\relax
    \hbox{\qed}
\else
  \qed
\fi
}
\newcommand{\Pdisj}{\mathbf{DISJ}}
\newcommand{\Pgt}{\mathbf{GT}}
\newcommand{\Pleq}{\mathbf{LEQ}}
\newcommand{\Peq}{\mathbf{EQ}}
\newcommand{\Pneq}{\mathbf{NEQ}}
\newcommand{\Pindex}{\mathbf{INDEX}}
\newcommand{\Ppoly}{\mathbf{MPOLY}}
\newcommand{\Pthr}{\mathbf{THR}}
\newcommand{\Pethr}{\mathbf{ETHR}}
\newcommand{\Poreq}{\mathbf{OR{-}EQ}}
\title{On the Inner Product Predicate and \\ a Generalization of Matching Vector Families}
\date{}
\begin{document}

\renewcommand\Affilfont{\small}

\author[1]{Balthazar Bauer}
\author[2]{Jevgēnijs Vihrovs \thanks{Work done in part while interning at Centre for Quantum Technologies at NUS. Supported by the ERC Advanced Grant MQC.}}
\author[3]{Hoeteck Wee \thanks{Work done in part while visiting Centre for Quantum Technologies at NUS. Supported in part by ERC Project aSCEND (H2020 639554).}}
\affil[1]{ENS, Paris\\ 45 Rue d'Ulm, 75005 Paris, France}
\affil[2]{Centre for Quantum Computer Science, Faculty of Computing,\protect\\ University of Latvia, Raiņa 19, Riga, Latvia, LV-1586}
\affil[3]{CNRS and ENS, Paris, 45 Rue d'Ulm, 75005 Paris, France}

\maketitle

\begin{abstract}
  Motivated by cryptographic applications such as predicate
  encryption, we consider the problem of representing an arbitrary
  predicate as the inner product predicate on two vectors. Concretely, fix
  a Boolean function $P$ and some modulus $q$. We are interested in
  encoding $x$ to $\vx$ and $y$ to $\vy$ so that
  \[P(x,y) = 1 \Longleftrightarrow \ip{\vx,\vy}= 0 \bmod q,\]
  where the vectors should be as short as possible.
  This problem can also be viewed as a generalization of
  matching vector families, which corresponds to the equality
  predicate. Matching vector families have been used in the
  constructions of Ramsey graphs, private information
  retrieval (PIR) protocols, and more recently, secret sharing.

 Our main result is a simple lower bound that allows us to show that
  known encodings for many predicates considered in the
  cryptographic literature such as greater than and threshold are
  essentially optimal for prime modulus $q$.
  Using this approach, we also prove lower bounds on encodings for composite $q$,
  and then show tight upper bounds for such predicates as greater than, index and disjointness.
  
\end{abstract}

\section{Introduction}

There are many situations in cryptography where one is interested in
computing some function $F$ of a sensitive input $x$ but the
computational model is restricted so that only ``simple'' functions
$F$ can be directly computed.  For instance, the entries of $x$ may be
encrypted so that only {\em affine} functions can be computed, or
distributed between multiple non-interacting parties so that only {\em
  local} functions can be computed, or simply that we only know how to
construct schemes for handling simple functions.

For all of these reasons, it is useful to be able to ``encode''
complex functions as simple functions. An extremely influential
example of an ``encoding'' in the cryptographic literature is that of
garbling schemes (or randomized encodings), which have found
applications in many areas of cryptography and elsewhere
(see~\cite{Yao82,FKN94,IK00,AIK04,App11,BHR12,PSbook} and references
therein).

In this work, we consider the problem of \emph{inner product 
  encoding}, namely, representing an arbitrary predicate as the inner
product predicate on two vectors. Concretely, fix a Boolean function $P$ (a predicate) and
some modulus $q$ (may be composite as well as prime). We are interested in mappings $x \mapsto \vx, y \mapsto \vy$ that map to vectors in $\mathbb Z_q^\ell$ such that for
all $x,y$:
  \[P(x,y) = 1 \Longleftrightarrow \ip{\vx,\vy}= 0 \bmod q,\]
and $\ell$ is as small as possible.
This notion is motivated by the study of predicate encryption in
\cite{KSW08}, where $q$ is typically very large, for instance, as
large as the domains of $P$, and can also be viewed as a natural
generalization of matching vector families to arbitrary predicates.

As an example, consider the equality
predicate over $[n]$. Here, if $q=2$, then it is not difficult
to show that the vectors must have length $\Omega(n)$. On the other
hand, if $q > n$, then it is sufficient to use vectors of length
$2$: the inner product of $(1,x)$ and $(y,-1)$ is $0 \bmod q$ iff $x=y$.
More generally, for any predicate $P :
\mathcal X \times \mathcal Y \rightarrow \{0,1\}$ and any prime $q \geq 2$,
the ``truth table'' construction achieves vectors of length
$\min\{ |\mathcal X|, |\mathcal Y|\}$.

Interestingly, inner product predicate encoding for the equality
predicate have been studied in combinatorics and complexity theory,
where they are known as matching vector families.  Moreover, matching
vector families have found many applications, including the
construction of Ramsey graphs, private information retrieval (PIR)
protocols \cite{Grolmusz2000,Yek08,Efremenko12,DGY11,DG15}, and more recently,
secret-sharing schemes \cite{LVW17,LVW18,LV18}. Here,
prior works showed that if $q$ is a prime, then we must use vectors of
length $\Omega(n^{\frac{1}{q-1}})$ \cite{DGY11}.

\subsection{Our results}

Our main results are nearly tight bounds for many predicates considered in
the cryptographic literature such as greater than and threshold, for both
prime and composite modulus $q$. In particular, we have the following results for
prime modulus $q$:

\begin{itemize}
\item Greater than predicate for numbers in $[n]$ requires vectors of length
  $n$.  This rules out the possibility of deriving the predicate
  encryption for range queries with $O(\sqrt{n})$ ciphertext and
  secret key sizes in \cite{BW07} as a special case of inner product
  predicate encryption.
\item Threshold for $n$-bit strings and threshold $t$ requires vectors
  of length $2^{n-t+1}$. This rules out the possibility of constructing
  full-fledged functional encryption schemes by carrying out FHE
  decryption in the lattice-based predicate encryption of Gorbunov,
  Vaikuntanathan and Wee \cite{GVW15} using a pairing-based functional
  encryption scheme for the inner product predicate.
\end{itemize}

We then investigate encodings for composite $q$, specifically when $q$ is a product of $k$ distinct primes.
In many cases, a lower bound of $\ell/k$ for composite $q$ follows naturally if our method gives lower bound $\ell$ for prime $q$.
For predicates such as greater than, index and disjointness, we are able to show tight lower and upper bounds for both prime and composite $q$.
The full summary of upper and lower bounds is shown in Table \ref{tab:res}, and the listed predicates are described in Section \ref{sec:pre}.

Finally, we also consider probabilistic inner product predicate encoding.
For example, there is a probabilistic encoding of length $O((\log n)^2)$ for the greater than predicate for numbers in $[n]$, while any deterministic encoding must have length $\Omega(n)$, if $q$ is prime.

%
%

\begin{table}[h]
\begin{center} 
\begin{tabular}{cccccc}\hline
predicate & \multicolumn{2}{c}{$q$ prime} &  \multicolumn{2}{c}{$q$ product of $k$ primes}\\
& upper & lower & upper & lower\\\hline
$\Peq_n$\tablefootnote{Bounds from previous works, see Section \ref{sec:eq} for references.}  & $O(qn^{\frac{1}{q-1}})$ & $\Omega(n^{\frac{1}{q-1}})$ & $2^{\tilde O\left((\log n)^{1/k}\right)}$ & $\Omega(\log n)$ \\
$\Pgt_n$ &  $n$ & $n$ & $n/k$ & $n/k$\\
$\Pdisj_n$\tablefootnote{\label{1sttablefoot}For sufficiently large $q$.}, $\Pindex_n, \Pneq_n$ & $n$ & $n$ & $n/k$ & $n/k$ \\
$\Pethr_n^t$\footref{1sttablefoot} \tablefootnote{Assuming $t \leq n-2$, see Section \ref{sec:ethr}.}& $n+1$ & $n/2$ & $n+1$ & $n/2k$ \\
$\Ppoly_n^{d,q}$ & $n^d$ & $n^d$ & $n^d$ & $n^d/k$ \\
$\Pthr_n^t$ & $n^{n-t+1}$ & $2^{n-t+1}$ & $n^{n-t+1}$ & $2^{n-t+1}/k$\\ 
$\Poreq_n^q$ & $2^n$ & $2^n$ & $2^n$ & $2^n/k$
\end{tabular}
\end{center}
\caption{Summary of upper and lower bounds}\label{tab:res}
\end{table}

\paragraph*{Our lower bound technique.} Our lower bound technique
is remarkably simple. Suppose that $q$ is prime and we can represent a predicate $P:
\mathcal X \times \mathcal Y \rightarrow \{0,1\}$ as an inner product
predicate on vectors of length $r$ corresponding to
mappings $x \mapsto \vx, y \mapsto \vy$. Following \cite{BDL13}, we consider a matrix $F$ of dimensions $|\mathcal X|
\times |\mathcal Y|$ over $\Z_q$ whose $(x,y)$'th entry is $\ip{\vx,\vy} \bmod q$.
Then the matrix $F$ has rank at most $r$, because we can write $F$ as
the product of two matrices of dimensions $|\mathcal X| \times r$ and
$r \times |\mathcal Y|$. Concretely, $F = UV$ where the $x$'th row of $U$
is $\vx^{\transpose}$ and the $y$'th column of $V$ is $\vy$. This means that to
show a lower bound on $r$, it suffices to show that $F$ has large rank,
e.g.~by exhibiting a full rank submatrix.

As an example, consider the greater than predicate on $[n]$ for any prime
modulus
$q$. Then, the matrix
$F$ is an $n \times n$ upper triangular matrix where all the entries on and above
the diagonal are non-zero. This matrix has rank $n$, hence any correct construction must have dimension at least $n$.
Note that the above lower bound argument breaks down when $q$ is composite.
In fact, if $q = 2^n$, there is an encoding for greater than with dimension
$1$: take $x \mapsto 2^x, y \mapsto 2^{n-y}$. Correctness follows from
the fact that $2^x \cdot 2^{n-y} = 0 \bmod 2^n \Leftrightarrow x \geq y$, and the
construction extends also to the setting where $q$ is a product of $n$
distinct primes.

In order to extend our lower bounds to composite $q$ that is the
product of $k$ distinct primes, we observe that if $F \bmod q$ contains a
triangular submatrix of dimensions $\ell \times \ell$, then there
exists some prime factor $p$ of $q$ such that $F \bmod p$ contains a
 triangular submatrix of dimensions $\ell/k \times \ell/k$; this
follows from looking at the CRT decomposition of $q$ and a pigeonhole
argument. This simple observation allows us to translate many of our
lower bounds to the composite modulus setting, which we prove to be
essentially optimal via new upper bounds.

For instance, for the ``greater than'' predicate, we obtain a tight
bound of $n/k$ when $q$ is a product of $k$ distinct primes; this is
sharp contrast to standard matching vector families (i.e., the
equality predicate), where we have constructions of length
$2^{\tilde{O}((\log n )^{1/k})}$ when $q$ is a product of $k$ distinct
primes. For the upper bound, we begin with a construction of length $1$
for $k=n$ and then derive the more general construction by treating
the inputs as vectors of length $n$ and then dividing that into $n/k$
blocks each of length $k$.

Finally, we extend our results to the randomized setting. Here, we use a similar argument to show that the minimum size of a probabilistic inner product encoding is upper bounded by the probabilistic rank introduced by Alman and Williams \cite{AW17}.

\paragraph{Organization.}
The paper is organized as follows.
We describe our lower bound method in Section \ref{sec:main}.
The notation and predicates used throughout the rest of the paper are defined in Section \ref{sec:pre}.
In Section \ref{sec:det} we describe lower and upper bounds for these predicates.
Finally, we consider probabilistic encodings in Section \ref{sec:rand}.

\section{Main Theorem} \label{sec:main}

In this section we describe our lower bound technique.
Let $P : \mathcal X \times \mathcal Y \rightarrow \{0, 1\}$
be a predicate, and $q \geq 2$ be the integer modulus.  We say that a matrix $F
: \mathcal X \times \mathcal Y$ \emph{represents} $P$ modulo $q$ if for all $x \in
\mathcal X, y \in \mathcal Y$, we have $F_{x,y} = 0 \bmod q$ iff
$P(x,y) = 1$.

An \emph{inner product encoding} of $P$ of length $\ell$ is a pair of mappings from $\clX, \clY$ to $Z_q^\ell$ that map $x, y$ to $\vx, \vy$ in a way that the matrix $F : \mathcal X \times \mathcal Y$ defined by $F_{x,y} = \ip{\vx, \vy} \bmod q = (\sum_{i=1}^{\ell} \vx_i \cdot \vy_i) \bmod q$ represents $P$.
Denote the length of the shortest reduction from $P$ to inner product modulo $q$ by $\DI(P,q)$ (Deterministic Inner product).
Then we have the following simple and effective lower bound method.
\begin{theorem} \label{thm:main} For any predicate $P$ and any prime $q \geq 2$, we have
\begin{equation*}\DI(P, q) = \min_{F} \rank(F),\end{equation*}
where $F$ is any matrix that represents $P$ modulo $q$.
\end{theorem}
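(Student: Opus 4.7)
The claim is an equality, so I would prove it by showing both inequalities $\DI(P,q) \le \min_F \rank(F)$ and $\DI(P,q) \ge \min_F \rank(F)$. Crucially, since $q$ is prime, $\mathbb{Z}_q$ is a field, so rank and the rank factorization are well-defined.

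For the direction $\DI(P,q) \ge \min_F \rank(F)$, I would mimic the paragraph already in the excerpt: given any inner product encoding of length $\ell$ with mappings $x \mapsto \vx$, $y \mapsto \vy$, define the matrix $F$ by $F_{x,y} = \ip{\vx,\vy} \bmod q$. By construction $F$ represents $P$ modulo $q$. Writing $U$ for the $|\mathcal{X}| \times \ell$ matrix whose $x$-th row is $\vx^{\transpose}$ and $V$ for the $\ell \times |\mathcal{Y}|$ matrix whose $y$-th column is $\vy$, we get $F = UV$, hence $\rank(F) \le \ell$. Taking the minimum over all encodings yields $\min_F \rank(F) \le \DI(P,q)$.

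For the converse direction $\DI(P,q) \le \min_F \rank(F)$, I would take any matrix $F$ that represents $P$ modulo $q$, with $\rank(F) = r$. Using that $\mathbb{Z}_q$ is a field, I would invoke the standard rank factorization to write $F = UV$ with $U$ of size $|\mathcal{X}| \times r$ and $V$ of size $r \times |\mathcal{Y}|$. Then define the encoding by letting $\vx$ be the transpose of the $x$-th row of $U$ and $\vy$ be the $y$-th column of $V$. We have $\ip{\vx,\vy} = F_{x,y} \bmod q$, which is $0$ iff $P(x,y)=1$, so this is a valid inner product encoding of length $r$. Hence $\DI(P,q) \le r$, and taking the minimum over $F$ completes the argument.

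There is no real obstacle: both directions are one-line matrix factorization arguments, and the only subtle hypothesis being used is that $q$ is prime (so $\mathbb{Z}_q$ is a field and rank factorization exists). I would flag this explicitly, since the subsequent sections show that for composite $q$ the clean equality breaks down and one must instead argue via submatrices and a CRT/pigeonhole step.
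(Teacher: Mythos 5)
Your proof is correct and takes essentially the same approach as the paper: both rely on the rank-factorization (decomposition-rank) characterization over the field $\mathbb{Z}_q$, reading an encoding off a factorization $F = UV$ and conversely producing a factorization from any encoding. Your write-up is just a more explicit two-inequality version of the paper's single-paragraph argument, and your remark that primality of $q$ is the hypothesis making $\mathbb{Z}_q$ a field is exactly the relevant point.
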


\begin{proof}
We show that if $P$ can be represented by a matrix $F$ modulo $q$, then the necessary and sufficient length of the encoding from $P$ to $F$ is exactly $\rank(F)$.
The decomposition rank definition states that the rank of an $m \times n$ matrix $F$ is the smallest integer $r$ such that $F$ can be factored as $F=UV$, where $U$ is an $m \times r$ matrix and $V$ is a $r \times n$ matrix.
Let $U_{x,*}$ be the row vector of $U$ that corresponds to $x \in \clX$ and $V_{*,y}$ be the column vector of $V$ that corresponds to $y \in \clY$.
Then the pair of mappings $x \mapsto U_{x,*}^{\transpose}$ and $y \mapsto V_{*,y}$ is a correct encoding of $P$, which is also the shortest possible for $F$.
\end{proof}

Therefore, to show a lower bound on the length of an encoding for $P$, it is sufficient to exhibit a set of rows $R$ and a set of columns $C$ such that for any matrix $F$ that represents $P$, the submatrix $F[C,R]$ is a full rank submatrix.
Typically we find a large full rank upper triangular submatrix and apply Theorem \ref{thm:main}.
Other times, we prove a lower bound for some predicate $Q$, and then prove that the same lower bound holds for $P$ by showing a predicate reduction from $Q$ to $P$ (see Section \ref{sec:red} for details).

For composite $q$, we have the following lower bound:
\begin{theorem} \label{thm:composite}
Let $q = p_1\cdots p_k$ be a product of $k$ distinct primes.
Let $P$ be a predicate such that every matrix $F$ that represents $P$ modulo $q$ is a triangular $n \times n$ matrix such that all numbers on the main diagonal are non-zero modulo $q$.
Then
$$\DI(P,q) \geq n/k.$$
\end{theorem}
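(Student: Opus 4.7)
The plan is to reduce to the prime case by applying the Chinese Remainder Theorem and a pigeonhole argument on the diagonal entries of $F$. Suppose we have an inner product encoding of $P$ modulo $q$ of length $\ell = \DI(P,q)$, yielding vectors $\vx, \vy \in \Z_q^\ell$ whose inner products populate an $n \times n$ matrix $F$ representing $P$. By the same factorization idea as in Theorem~\ref{thm:main}, we can write $F = UV$ over $\Z_q$, with $U$ of dimension $n \times \ell$ and $V$ of dimension $\ell \times n$. Reducing this identity modulo any prime factor $p_j$ gives $F \bmod p_j = (U \bmod p_j)(V \bmod p_j)$ over the field $\Z_{p_j}$, so the rank of $F \bmod p_j$ over $\Z_{p_j}$ is at most $\ell$.

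Next I would exploit the hypothesis that $F$ is triangular with all diagonal entries nonzero modulo $q$. For each diagonal entry $F_{i,i}$, the fact that $F_{i,i} \not\equiv 0 \pmod q$ combined with the CRT decomposition $\Z_q \cong \Z_{p_1} \times \cdots \times \Z_{p_k}$ implies that $F_{i,i} \not\equiv 0 \pmod{p_{j(i)}}$ for at least one index $j(i) \in [k]$. A pigeonhole over the $n$ diagonal entries then yields a prime $p = p_j$ and a set $S \subseteq [n]$ of size at least $n/k$ such that $F_{i,i} \not\equiv 0 \pmod{p}$ for every $i \in S$.

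Finally, I would conclude by examining the principal submatrix $F[S,S] \bmod p$. Since $F$ is triangular and $S$ consists of equal row and column indices, $F[S,S]$ is also triangular; its diagonal entries are nonzero modulo $p$ by choice of $S$. Over the field $\Z_p$, a triangular matrix with nonzero diagonal is nonsingular, so $\rank(F[S,S] \bmod p) = |S| \geq n/k$. Combining this with the upper bound $\rank(F \bmod p) \leq \ell$ derived in the first step yields $\ell \geq n/k$.

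The argument is essentially routine; the only conceptual point is the pigeonhole step, which uses the elementary observation that a unit in $\Z_q$ must remain a unit modulo at least one of the prime factors of $q$. No step appears to be an obstacle, though one must be careful that the submatrix selected by the pigeonhole uses the same row and column indices so that triangularity is preserved after restriction.
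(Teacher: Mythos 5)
Your proof is correct and follows essentially the same route as the paper's: reduce modulo each prime factor, pigeonhole on the diagonal entries to find a prime $p_j$ with at least $n/k$ nonzero diagonal entries, and conclude via the rank of $F \bmod p_j$. You spell out two steps the paper leaves implicit---that reducing the factorization $F=UV$ modulo $p_j$ bounds $\rank(F \bmod p_j)$ by $\ell$, and that a triangular principal submatrix with nonzero diagonal witnesses the rank lower bound---which makes the argument cleaner; the only small imprecision is the closing remark about ``units'' (nonzero mod $q$ does not mean unit in $\Z_q$), but the actual reasoning uses only that a nonzero residue mod squarefree $q$ must be nonzero mod some prime factor, which is correct.
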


\begin{proof}
Let $F$ represent $P$ modulo $q$.
Let $F^{(i)} = F \bmod p_i$ (all entries taken modulo $p_i$).
Since all entries on the main diagonal of $F$ are non-zero, there exists $i \in [k]$ such that there at least $n/k$ non-zero entries on the main diagonal of $F^{(i)}$ by pigeonhole principle.
As $F^{(i)}$ is also a triangular matrix, the rank of $F^{(i)}$ modulo $p_i$ is at least $n/k$.
By Theorem \ref{thm:main}, the length of any encoding from $P$ to $F^{(i)}$ modulo $p_i$ must be at least $n/k$, hence also $\DI(P,q) \geq n/k$.
\end{proof}


\section{Definitions and Predicates} \label{sec:pre}

In this section, first we describe some of the notation used throughout the paper.
Then we define the predicates examined in the paper, and define the predicate reduction.

\paragraph{Notation.}

We denote the set of all subsets of $[n]$ by $2^{[n]}$.
For a set $S \subseteq [n]$, define the characteristic vector $\chi(S) \in \zo^n$ by
$$\chi(S)_i = \begin{cases} 1, & \text{if $i \in S,$} \\ 0, & \text{otherwise.}\end{cases}$$
Conversely, for a vector $x \in \{0, 1\}^n$, let $\chi^{-1}(x)$ be the characteristic set of $x$.

For simplicity, denote the characteristic vector of $\{i\}$ by $\e_i$ (the length is usually inferred from the context).
The characteristic vectors of $\varnothing$ and $[n]$ are denoted by $0^n$ and $1^n$.
We denote the identity matrix of dimension $n$ by $I_n$, and all ones matrix by $J_n$.

For a truth expression $T$, we define $[T]$ to be 1 if $T$ is true, and 0 if $T$ is false.
For example, $[x=y] = 1$ iff $x=y$.

For a number $x \in [2^n]$, let $\bin(x) \in \zo^n$ be the binary representation of $x-1$.

\paragraph{Predicates.} We consider the predicates listed below.
\begin{itemize}
\item Equality: $\cX = \cY = [n]$ and
$\Peq_n(x,y) = [x = y].$

\item Greater than: $\cX = \cY = [n]$ and
$\Pgt_n(x,y) = [x > y].$

\item Inequality: $\cX = \cY = [n]$ and
$\Pneq_n(x,y) = [x \neq y].$

\item Index: $\cX = \zo^n, \cY = [n]$ and
$\Pindex_n(x, i) = [x_i = 0].$
Here, $x_i$ denotes the $i$'th coordinate of $x$.  Note that
we can also interpret $x$ as the characteristic vector of a subset
of $[n]$.
Because in our model $0 \bmod q$ corresponds to ``true'', we have defined the index to be true if the bit value in the corresponding position is 0.

\item Disjointness: $\clX = \clY = 2^{[n]}$ and
$\Pdisj_n(S,T) = [S \cap T = \varnothing].$

\item Exact threshold: $\clX = \clY = 2^{[n]}$ and
$\Pethr_n^t(S,T) = [|S \cap T| = t],$
where $t \in [n]$ is the threshold parameter.

\item Threshold: $\clX = \clY = 2^{[n]}$ and
$\Pthr_n^t(S,T) = [|S \cap T| \geq t],$
where $t \in [n]$ is the threshold parameter.

\item Multilinear polynomials: $\mathcal X = \mathbb Z_q^n$, $\mathcal Y \subseteq \{p \mid p \in \mathbb Z_q[x_1,\ldots,x_n], \deg(p) \leq d\}$,
the latter is the set of all multilinear polynomials of degree at most $d$. Then
$\Ppoly_n^{d,q}(x,p) = [p(x_1, \ldots, x_n) = 0 \bmod q].$

\item Disjunction of equality tests: $\clX = \clY = \mathbb Z_q^n$ and
$\Poreq_n^q(x,y) = [\bigvee_{i=1}^n x_i = y_i].$
\end{itemize}

\paragraph{Reductions} \label{sec:red}

We say that a predicate $P_1 : \clX_1 \times \clY_1 \to \zo$ can be \emph{reduced} to a predicate $P_2 : \clX_2 \times \clY_2 \to \zo$ if there exist two mappings $f : \clX_1 \to \clX_2$ and $g : \clY_1 \to \clY_2$ such that $P_2(f(x),g(y)) = P_1(x,y)$ for all $x \in \clX_1, y \in \clY_1$ (or mappings $f : \clX_1 \to \clY_2$ and $g : \clY_1 \to \clX_2$).
In that case we write $P_2 \Rightarrow P_1$.

For example, consider the following reductions:

\begin{itemize}
\item $\Pdisj_n \Rightarrow \Pindex_n \Rightarrow \Pneq_n$.

The reduction $\Pdisj_n \Rightarrow \Pindex_n$ holds since $\Pindex_n(x,i) = \Pdisj_n(\chi^{-1}(x),\{i\})$.
On the other hand, $\Pindex_n \Rightarrow \Pneq_n$, as $\Pneq_n(i,j) = \Pindex_n(\e_i,j)$. 
%
%

\item $\Pindex_n \Rightarrow \Pgt_n$.

As $\Pgt_n(x,y) = \Pindex_n(\chi([y]),x)$, the reduction follows.

\item Let $P : \clX \times \clY \to \{0, 1\}$ be any predicate.
Then $\Pindex_{\min\{|\clX|,|\clY|\}} \Rightarrow P$.

Let $T$ be the $\clX \times \clY$ truth table of $P$ defined by $T_{x,y} = P(x,y)$.
Then we have $P(x,y) = \Pindex_{|\clX|}(T_x,y)$ and $\Pindex_{|\clX|} \Rightarrow P$.
Similarly, we also have $\Pindex_{|\clY|} \Rightarrow P$.
\end{itemize}

Effectively, then an inner product encoding for $P_2$ implies an encoding for $P_1$ and a lower bound for $P_1$ implies a lower bound for $P_2$.
This makes it easier to prove upper and lower bounds.
For example, as later we prove that $\DI(\Pindex_n,q) = n$ for prime $q$ (see Section \ref{sec:index}), the last reduction implies that $\DI(P,q) \leq \min\{|\clX|,|\clY|\}$ for all predicates $P$.

If $q$ is a product of $k$ distinct primes, then $\DI(P,q) \leq \min\{|\clX|,|\clY|\}/k$ for the same reason.
Therefore, for any predicate, if $k = \min\{|\clX|,|\clY|\}$, there is an encoding of $\clX$ and $\clY$ simply to numbers modulo $q$.

\section{Deterministic Encodings} \label{sec:det}

In this section, we apply our technique to provide lower bounds on deterministic inner product encodings for many well-known predicates.
For each of them, first we discuss the encodings and then proceed to prove lower bounds.

\subsection{Equality} \label{sec:eq}


An encoding for $\Peq_n$ over $q$ is a matching family of vectors modulo $q$ \cite{DGY11}.
The maximum size of a matching family of vectors of length $\ell$ modulo $q$ is denoted by $\MV(q,\ell)$ and has been studied extensively.
Lower and upper bounds on $\MV(q,\ell)$ give upper and lower bounds on $\DI(\Peq_n,q)$, respectively (in the relevant literature, usually $q$ and $\ell$ are denoted by $m$ and  $n$, respectively).

For prime $q$, a tight $\DI(\Peq_n,q) = \Theta(qn^{\frac{1}{q-1}})$ bound is known \cite{DGY11}.
If $q$ is a product of $k$ primes, we have a $2^{\tilde O\left((\log n)^{1/k}\right)}$ upper bound from \cite{Grolmusz2000}.
For any composite $q$, we also have an $\Omega(\log n)$ lower bound from \cite{DH13}.

Here, first we show two simple upper bounds for $q =2$ and $q \geq n$.
Then we reprove the optimal lower bound for $q =2$ using our rank lower bound.


\paragraph*{Upper bounds.}
For $q = 2$, we construct an encoding of length $n$.
Let $\vx = \e_x$ and $\vy = 1^n - \e_y$.
Then $\ip{\vx,\vy} = \ip{\e_x,1^n} - \ip{\e_x,\e_y} = 1 - [x=y]$, thus it is a correct inner product encoding and $\DI(\Peq_n, 2) \leq n$.

Let $q$ be any integer such that $q \geq n$.
Let $\vx = (1, x)$ and $\vy = (y, -1)$.
Then $\ip{\vx, \vy} = y - x$, so it is 0 iff $x = y$.
Therefore, $\DI(\Peq_n, q) \leq 2$.

\paragraph*{Lower bound.}
We show a matching lower bound for case $q = 2$.
There is a unique matrix $F$ over $\Z_2$ that represents $\Peq_n$, namely $F_{x,y} = 0 \bmod q \Leftrightarrow x = y$.
Express $F = J_n - I_n$.
By sub-additivity of rank, we have $\rank(F) \geq \rank(I_n) - \rank(J_n) = n-1$.
Hence, by Theorem \ref{thm:main}, any inner product encoding of $\Peq_n$ modulo 2 requires vectors of length at least $n-1$, that is, $\DI(\Peq_n, 2) \geq n-1$.

\subsection{Index} \label{sec:index}

We prove that $\DI(\Pindex_n,q) = \lceil n/k \rceil$, for every $q$
that is a product of $k$ distinct primes.

For some $q$, the upper bound follows from $\Pdisj_n \Rightarrow \Pindex_n$ (see Section \ref{sec:disj}).
However, there is a much simpler encoding, which we present below. Moreover,
this upper bound holds for every $q$ that is the product of $k$ distinct primes.

\paragraph*{Upper bound.}
We begin with the warm-up for the special case $k=n$. Here,
consider
$$\vx = \prod_{i=1}^n p_i^{1-x_i},\qquad \vy = q/p_y.$$
Then $\ip{\vx, \vy} = 0 \bmod q$ iff $x_y = 0$.


Next, we consider general $k,n$.
Since $\Pindex_{\lceil n/k\rceil \cdot k} \Rightarrow \Pindex_n$, it is enough to construct an encoding for the case $k \mid n$.
The data is the string $x \in \{0, 1\}^n$, and the index is given by $y \in [n]$.
Encode $x$ as an $n/k \times k$ binary matrix $X_{i,j} = x_{(i-1) \cdot k + j}$, and $y$ as an $n/k \times k$ binary matrix $Y_{i,j} = [y = (i-1) \cdot k + j]$.

Now we construct the encoding.
$$
\vx_i = \prod_{j=1}^k p_j^{X_{i,j}},
\hspace{1cm}
\vy_i = \begin{cases}
q/p_j, &\text{if $Y_{i,j}=1$,}\\
0, & \text{otherwise.}
\end{cases}
$$

Now we analyze the correctness of the protocol.
Let $i, j$ be such that $Y_{i,j} = 1$.
Then $\ip{\vx,\vy} = \prod_{l=1}^k p_l^{X_{i,l}} \cdot (q/p_j)$.
\begin{itemize}
\item If $X_{i,j} = 1$, then $\ip{\vx,\vy} = 0 \bmod q$.
\item If $X_{i,j} = 0$, then $p_j \nmid \ip{\vx,\vy}$, hence $\ip{\vx,\vy} \neq 0 \bmod q$.
\end{itemize}

\paragraph*{Lower bound.}

The lower bound follows from $\Pindex_n \Rightarrow \Pneq_n$ (see Section \ref{sec:neq}).

\subsection{Inequality} \label{sec:neq}

We show that $\DI(\Pneq_n,q) = \lceil n/k \rceil$, for every $q$
that is the product of $k$ distinct primes.

\paragraph*{Upper bound.}

The upper bound follows from $\Pindex_n \Rightarrow \Pneq_n$ (see Section \ref{sec:index}).

\paragraph*{Lower bound.}

Any matrix that represents $\Pneq_n$ is a diagonal matrix with non-zero entries on the main diagonal.
By Theorem \ref{thm:composite}, it follows that $\DI(\Pneq_n,q) \geq n/k$.

\subsection{Greater Than} \label{sec:gt}

We show that $\DI(\Pgt_n,q) = \lceil n/k \rceil$, for every $q$
that is the product of $k$ distinct primes.

\paragraph*{Upper bound.}
%
%

The upper bound follows from $\Pindex_n \Rightarrow \Pgt_n$ (see Section \ref{sec:index}).

If $q$ is prime, the encoding simplifies to $\vx = \e_x$ and $\vy = \sum_{i=1}^y \e_i$.
If $k = n$, a different simple encoding is $\vx = \prod_{i=1}^{x-1} p_i$ and $\vy = \prod_{i=y+1}^{n} p_i$.

\paragraph*{Lower bound.}
Let $F$ be any matrix that represents $\Pgt_n$ modulo $q$.
Then all entries below the main diagonal are 0, while all entries on and above the main diagonal are non-zero, hence $F$ is a triangular matrix.
By Theorem \ref{thm:composite}, we conclude that $\DI(\Pgt_n,q) \geq n/k$.

\subsection{Disjointness} \label{sec:disj}

We prove that $\DI(\Pdisj_n,q) = \lceil n/k \rceil$ for an appropriate choice of $q$ that depends on $n$, and that $\DI(\Pdisj_n,q) \geq n/k$ if $q$ is any product of $k$ distinct primes.

\paragraph*{Upper bound.}

We start with a simple encoding for $k = n$ that works for any product of $n$ distinct primes $q$.
Recall that the sets $S$ and $T$ are the input to disjointness.
Let
$$ \vx = \prod_{i=1}^n p_i^{1-\chi(S)_i}, \hspace{1cm} \vy = \prod_{i=1}^n p_i^{1-\chi(T)_i}.$$
Then $\ip{\vx,\vy} = \prod_{i=1}^n p_i^{2-\chi(S)_i-\chi(T)_i}$ is $0 \bmod q$ iff $S$ and $T$ are disjoint.
If $k < n$, then for any $p_i$ it is possible that although some of the products $\vx_i \cdot \vy_i$ are not divisible by $p_i$, their sum might be divisible by $p_i$, hence the encoding doesn't work for any $q$.

For the general case, the following variation of Dirichlet's theorem will be useful for us.
\begin{theorem}[Dirichlet] \label{thm:prime}
For any integer $q \geq 2$, there are infinitely many primes $p$ such that $p = 1 \bmod q$.
\end{theorem}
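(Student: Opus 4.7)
The plan is to give an elementary proof via cyclotomic polynomials, handling the special case $p \equiv 1 \pmod{q}$ without invoking the analytic machinery (Dirichlet $L$-functions) required for the full statement of Dirichlet's theorem on arithmetic progressions. The restriction to the arithmetic progression $\{1, 1+q, 1+2q, \ldots\}$ admits a purely algebraic argument going back to Euclid-style contradictions.

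Let $\Phi_q(x) \in \Z[x]$ denote the $q$-th cyclotomic polynomial, characterized by the identity $x^q - 1 = \prod_{d \mid q} \Phi_d(x)$. The heart of the argument is the following lemma: \emph{if a prime $p$ divides $\Phi_q(a)$ for some integer $a$, then either $p \mid q$ or $p \equiv 1 \pmod{q}$.} I would prove this by splitting on $\gcd(a, p)$. When $\gcd(a, p) = 1$, since $\Phi_q(x) \mid x^q - 1$ we have $a^q \equiv 1 \pmod{p}$, so the multiplicative order $d$ of $a$ modulo $p$ divides $q$. If $d = q$, Fermat's little theorem yields $q \mid p - 1$. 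If $d < q$, then $a$ is a root of some $\Phi_e$ with $e \mid d < q$; both $\Phi_e$ and $\Phi_q$ appear as factors of $x^q - 1$, so $a$ is a repeated root of $x^q - 1$ modulo $p$, forcing the derivative $q a^{q-1}$ to vanish mod $p$, hence $p \mid q$. When $p \mid a$, we have $\Phi_q(a) \equiv \Phi_q(0) \pmod{p}$, and $\Phi_q(0)$ equals $\pm 1$ unless $q = p_0^r$ is a prime power, in which case $\Phi_q(0) = p_0$; either way $p \mid q$.

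Given the lemma, suppose for contradiction that only finitely many primes $p_1, \ldots, p_k$ satisfy $p \equiv 1 \pmod{q}$, and set $M = q \cdot p_1 \cdots p_k$. I would take a large integer $c$ (with $p_0^2 \mid c$ in the prime-power case of $q$) and consider $\Phi_q(cM)$. Since $\Phi_q$ is nonconstant for $q \geq 2$, $|\Phi_q(cM)| \to \infty$ as $c \to \infty$. Any prime factor $p$ of $\Phi_q(cM)$ must, by the lemma, fall into one of two cases: (i) $p \equiv 1 \pmod{q}$, or (ii) $p \mid q$. In case (i), the assumption forces $p \in \{p_1, \ldots, p_k\}$, hence $p \mid cM$, and so $p \mid \Phi_q(0)$; this is absurd when $\Phi_q(0) = \pm 1$, and when $\Phi_q(0) = p_0$ it forces $p = p_0 \mid q$, contradicting $p \equiv 1 \pmod{q}$ since no prime can both divide $q$ and be $\equiv 1 \pmod q$. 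Thus case (i) never occurs. In case (ii), a short $p$-adic analysis bounds $v_p(\Phi_q(cM))$ by a constant depending only on $q$, so $|\Phi_q(cM)|$ would be uniformly bounded across all $c$, contradicting $|\Phi_q(cM)| \to \infty$.

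The main obstacle is the prime-power case, where $\Phi_q(0) = p_0$ is nonzero and so case (ii) primes could a priori contribute unboundedly to $\Phi_q(cM)$. I would handle this by expanding $\Phi_q(cM) = \Phi_q(0) + cM \cdot Q(cM)$ for some $Q \in \Z[x]$ and choosing $c$ with $p_0^2 \mid c$ (so $p_0^2 \mid cM \cdot Q(cM)$), yielding $v_{p_0}(\Phi_q(cM)) = 1$ exactly. Consequently $\Phi_q(cM) / p_0$ still tends to infinity and must contain a prime factor outside the pre-image $\{p_0\} \cup \{p_1,\ldots,p_k\}$, producing a fresh prime $\equiv 1 \pmod{q}$ and completing the contradiction.
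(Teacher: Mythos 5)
The paper cites Dirichlet's theorem without proof; your proposal supplies a self-contained, elementary (non-analytic) argument for the special progression $p \equiv 1 \pmod q$, which is indeed the right choice since this case is famously accessible via cyclotomic polynomials. Your key lemma and its proof are essentially correct: the order argument handles $d = q$, and the repeated-root/derivative argument correctly shows $p \mid q$ when $d < q$. The Euclid-style contradiction with $M = q \cdot p_1 \cdots p_k$ is the standard and correct strategy.

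However, there is a factual error that runs through the second half of your write-up: you assert that $\Phi_q(0) = p_0$ when $q = p_0^r$ is a prime power. This is false; in fact $\Phi_q(0) = 1$ for every integer $q \geq 2$ (you are thinking of $\Phi_q(1)$, which equals $p_0$ for prime powers and $1$ otherwise; one sees $\Phi_q(0)=1$ immediately from $\Phi_q(0) = (-1)^{\sum_{d\mid q}\mu(d)} = (-1)^0$ for $q>1$, or just by checking $\Phi_p(x) = 1 + x + \cdots + x^{p-1}$). As a consequence, the entire ``main obstacle'' paragraph is solving a problem that does not exist: since $\Phi_q(0) = 1$ and $q \mid cM$, one has $\Phi_q(cM) \equiv 1 \pmod{cM}$, so \emph{no} prime dividing $cM$ (hence none dividing $q$ and none among $p_1,\ldots,p_k$) can divide $\Phi_q(cM)$, and the $p$-adic valuation bound, the choice $p_0^2 \mid c$, and the division by $p_0$ are all unnecessary. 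The same confusion makes the final clause of your lemma proof (``either way $p \mid q$'') misleading: when $p \mid a$, the case is simply vacuous because $p \mid \Phi_q(a)$ and $p \mid a$ would force $p \mid \Phi_q(0) = 1$. Once you replace the incorrect value of $\Phi_q(0)$ with the correct one, your argument collapses to a clean three-line finish and is fully correct.
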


Let $q = p_1\cdots p_k$ be a product of $k$ distinct primes $p_1, \ldots, p_k$ to be defined later.
We construct an encoding of length $n/k$ for the case $k \mid n$.
Encode $S \subseteq [n]$ as an $n/k \times k$ binary matrix $X_{i,j} = \chi(S)_{(i-1) \cdot k + j}$.
Similarly encode $T$ as $Y$.
Let
$$
\vx_i = \prod_{j=1}^k p_j^{1-X_{i,j}},
\hspace{1cm}
\vy_i = \prod_{j=1}^k p_j^{1-Y_{i,j}}.
$$

Now we find the appropriate primes $p_1, \ldots, p_k$ for the general case.
We construct them and prove the correctness by induction on $k$.

\textbf{Base case.}
If $k=1$, then $q$ is a prime itself.
Pick any prime $q$ such that $q > n$.
We have $\ip{\vx,\vy} = \sum_{i=1}^n q^{2-\chi(S)_i-\chi(T)_i}$.
If $x$ and $y$ are disjoint, then $q \mid \ip{\vx,\vy}$.
Suppose that $S$ and $T$ are not disjoint.
Let $b = |S \cap T|$.
Then $\ip{\vx,\vy} = (\sum_{i \in S \cap T} 1) \bmod q = b \bmod q$.
As $b \leq n$, we have $b < q$, therefore $\ip{\vx,\vy} \neq 0 \bmod q$.

\textbf{Inductive step.}
Assume that there exists a correct encoding for some $q$ such that it is a product of $k-1$ distinct primes $p_1, \ldots, p_{k-1}$.
Let $p_k$ be a prime such that $p_k > (n/k)\cdot (p_1\cdots p_{k-1})^2$ and $p_k = 1 \bmod (p_1\cdots p_{k-1})$ (such exist by Theorem \ref{thm:prime}).

Suppose that $\ip{\vx,\vy} = p_k\cdot a + b$, where $b \in \{0,\ldots,p_k-1\}$.
Examine the sets $S^{(k)}= \{i \in [n/k] \mid ik \in S\}$ and $T^{(k)}= \{i \in [n/k] \mid ik \in T\}$.
\begin{itemize}
\item Suppose that $S^{(k)}$ and $T^{(k)}$ are not disjoint.
Then the set $I = S^{(k)} \cap T^{(k)}$ is non-empty.
If $i \notin I$, then at least one of $X_{i,k}$ and $Y_{i,k}$ is 0, thus $\vx_i \cdot \vy_i = \prod_{j=1}^k p_j^{2-X_{i,j}-Y_{i,j}}$ is divisible by $p_k$.
Thus, we have that $b = \sum_{i \in I} \prod_{j=1}^{k-1} p_j^{2-X_{i,j}-Y_{i,j}} < (n/k) \cdot (p_1\cdots p_{k-1})^2 < p_k$.
Therefore, $\ip{\vx,\vy} = b \bmod p_k \neq 0 \bmod p_k$.
\item Suppose that $S^{(k)}$ and $T^{(k)}$ are disjoint.
Then for all $i \in [n/k]$, we have that $p_k \mid \vx_i\vy_i$.
Therefore, $p_k \mid \ip{\vx,\vy}$.

Moreover, since $p_k = 1 \bmod (p_1\cdots p_{k-1})$, we have that $\vx_i \bmod (p_1\cdots p_{k-1}) = \prod_{j=1}^{k-1} p_j^{1-X_{i,j}}$ and $\vy_i \bmod (p_1\cdots p_{k-1}) = \prod_{j=1}^{k-1} p_j^{1-Y_{i,j}}$.
Therefore, $\ip{\vx,\vy} \bmod (p_1\cdots p_{k-1})$ is equal to 0 iff the sets $S \setminus S^{(k)}$ and $T \setminus T^{(k)}$ are disjoint by the inductive hypothesis.
\end{itemize}

\paragraph*{Lower bound.}

The lower bound follows from $\Pdisj_n \Rightarrow \Pindex_n$ (see Section \ref{sec:index}).

\subsection{Exact Threshold} \label{sec:ethr}

\paragraph*{Upper bound.}

The following encoding modulo $q \geq n$ of length $n+1$ is due to Katz, Sahai and Waters \cite{KSW08}.
For all $1 \leq i \leq n$, let $\vx_i = \chi(S)_i$, and let $\vx_{n+1} = 1$.
For all $1 \leq i \leq n$, let $\vy_i = \chi(T)_i$, and let $\vy_{n+1} = -t$.
Then $\ip{\vx, \vy}$ is equal to 0 iff $|S \cap T| = t$.
Therefore, $\DI(\Pethr_n^t, q) \leq n+1$.

Surprisingly, if $t \geq n-1$, there exist constant size encodings.
\begin{itemize}
\item If $t = n$, there is an encoding of length 2.
The encoding is as follows: $\vx = (1, [S=[n]])$ and $\vy = (1,-[T=[n]])$.
Then we have $\ip{\vx, \vy} = 1 - [S=[n]]\cdot [T=[n]]$, which is 0 iff $S = T = [n]$.

\item If $t = n-1$, there is an encoding of length 3. 
The encoding for $S$ and $T$ is as follows:
\begin{equation*}
\vx = \begin{cases}
(1,0,0), & \text{if $|S|=n$,} \\
(0,i,1), & \text{if $|S|=[n]\setminus \{i\}$,} \\
(1,-1,1), & \text{otherwise.}
\end{cases}
\hspace{1cm}
\vy = \begin{cases}
(1,0,0), & \text{if $|T|=n$,} \\
(0,1,-i), & \text{if $|T|=[n]\setminus \{i\}$.} \\
(1,1,1), & \text{otherwise.}
\end{cases}
\end{equation*}
It is easy to check by hand that $\ip{\vx,\vy}=0$ iff $|S\cap T|=n-1$.
Note that we require $q \geq n+2$.
\end{itemize}

\paragraph*{Lower bound.}

We show that for $1 \leq t \leq n-2$, we have $\DI(P,q) \geq \max\{n-t+2, t+2\}/k \geq (n/2+2)/k$.

\begin{enumerate}[(a)]

\item 

First we prove that if $t \geq 1$, the length of any encoding must be at least $(n-t+2)/k$.
We show that by using two reductions.

Firstly, we have $\Pethr_n^t \Rightarrow \Pethr_{n-t+1}^1$, because we can map $S \mapsto S \cup \{n-t+2, \ldots, n\}$.
Secondly, we prove that $\Pethr_m^1 \Rightarrow \Pgt_{m+1}$.
Consider the following mappings:
\begin{equation}
f = \begin{cases}
1 \mapsto \varnothing, \\
i \mapsto [i-1],
\end{cases}
\hspace{1cm}
g = \begin{cases}
j \mapsto \{j\}, \\
m+1 \mapsto \varnothing.
\end{cases}
\end{equation}
Consider a pair of numbers $x, y \in [m+1]$.
If $x = 1$, then $\Pgt_{m+1}(x,y) = 0$ and also $\Pethr_m^1(f(x),g(y)) = \Pethr_m^1(\varnothing, g(y)) = 0$.
If $y = m+1$, then $\Pgt_{m+1}(x,y) = 0$ and $\Pethr_m^1(f(x),g(y)) = \Pethr_m^1(f(x), \varnothing) = 0$.
Otherwise, $\Pethr_m^1(f(x),g(y)) = \Pethr_m^1([x-1],\{y\}) = \Pgt_{m+1}(x,y)$.
Hence the reduction is correct.

Therefore, we conclude that $$\DI(\Pethr_n^t,q) \geq \DI(\Pethr_{n-t+1}^1,q) \geq \DI(\Pgt_{n-t+2},q) \geq (n-t+2)/k$$ by the lower bound on greater than of Section \ref{sec:gt}.

%
%
%

\item Now we prove that if $t \leq n-2$, the length of any encoding is at least $(t+2)/k$.
Again, we exhibit two reductions.

Firstly, $\Pethr_n^t \Rightarrow \Pethr_{t+2}^t$ simply mapping any set to itself.
Secondly, $\Pethr_m^{m-2} \Rightarrow \Pneq_m$.
This is because we can map $x \mapsto [m] \setminus \{x\}$ for any $x \in [m]$.
Then the size of the intersection $|([m]\setminus \{x\}) \cap ([m]\setminus \{y\})|$ is equal to $m-2$ if $x \neq y$, and $m-1$, if $x = y$.

Therefore, it follows that
$$\DI(\Pethr_n^t,q) \geq \DI(\Pethr_{t+2}^t,q) \geq \DI(\Pneq_{t+2},q) \geq (t+2)/k$$
by the lower bound on inequality of Section \ref{sec:neq}.

%
%
%
\end{enumerate}

Therefore, for any $1 \leq t \leq n-2$, any encoding must have length at least $\max\{n-t+2,t+2\}/k$ and we have that $\DI(\Pethr_n^t,q) = \Omega(n)$.

\subsection{Multilinear Polynomials} \label{sec:poly}

First we show a known encoding that gives $\DI(\Ppoly_n^d,q) \leq {n \choose \leq d} = O(n^d)$.
Then we show a lower bound of $\DI(\Ppoly_n^d,q) \geq {n \choose d}/ k = \Omega(n^d/k)$.
For prime $q$, we show an optimal lower bound $\DI(\Ppoly_n^d,q) \geq {n \choose \leq d}$.

\paragraph*{Upper bound.}

The following is a simple construction by \cite{KSW08}.
For $S \subseteq [n]$, let $X_S = \prod_{i\in S} x_i$ and let $p = \sum_{S \subseteq [n], |S| \leq d} a_S X_S$ be a multilinear polynomial of degree at most $d$.
For each subset $S \subseteq [n]$ such that $|S| \leq d$, let $\vx_S = X_S$ and $\vy_S = a_S$; then $\ip{\vx,\vy}$ is precisely equal to $p(x)$.
Since a multilinear polynomial of degree at most $d$ on $n$ variables has at most ${n \choose \leq d} = \sum_{i=0}^d {n \choose i} \leq (n+1)^d$ monomials, it follows that $\DI(\Ppoly_n^d,q) = O(n^d)$.

\paragraph*{Lower bound.}

We show a reduction $\Ppoly_n^d \Rightarrow \Pneq_{{n \choose d}}$.
Let $S$ be the bijection from the numbers in $\left[{n \choose d}\right]$ to subsets of $[n]$ of size $d$.
For a pair of inputs $x, y \in \left[{n \choose d}\right]$, consider mappings $x \mapsto \chi(S(x))$ and $y \mapsto X_{S(y)}$.
Since $\Ppoly_n^d(\chi(S(x)), X_{S(y)}) = 0$ iff $x \neq y$, it is a correct reduction.
Thus, $\DI(\Ppoly_n^d,q) \geq {n \choose d}/k = \Omega(n^d/k)$ by the lower bound from Section \ref{sec:neq}.

Note that if $q$ is prime, we can get a tight lower bound of ${n \choose \leq k}$.
Since any two distinct polynomials disagree on some inputs, each polynomial must be mapped to a different vector.
Therefore, the number of possible vectors must be at least the number of possible polynomials, $|\mathbb Z_q^n| \geq |\mathcal Y|$.
The total number of possible monomials of degree at most $d$ is ${n \choose \leq d}$.
Each monomial can have any coefficient in $\mathbb Z_q$.
It implies that $q^m \geq q^{{n \choose \leq d}}$ and $m \geq {n \choose \leq d}$.

%

\subsection{Threshold}

First we show an upper bound of $\DI(\Pthr_n^t,q) = O(n^{n-t+1})$ for $q \geq n$, and then a lower bound of $\DI(\Pthr_n^t,q) \geq 2^{n-t+1}/k$.

\paragraph*{Upper Bound.}

The idea is to encode the threshold into multilinear polynomial evaluation.
Let $x = \chi(S)$ and $y = \chi(T)$.
Examine the following polynomial:
\begin{equation*}p_y(x) = \left(\sum_{i=1}^n x_iy_i - t\right) \cdot \left(\sum_{i=1}^n x_iy_i - (t+1)\right)\cdot \ldots\cdot \left(\sum_{i=1}^n x_iy_i - n\right).\end{equation*}
Firstly, $\sum_{i=1}^n x_iy_i = |S \cap T|$, thus $p_y(x) = 0$ iff $|S \cap T| \geq t$.
Secondly, the degree of each factor is 1, hence $\deg(p_y) = n-t+1$.
Note that the polynomial $p_y$ is still multilinear, since all the variables are 0 or 1.
Therefore, we have a reduction $\Ppoly_n^{n-t+1} \Rightarrow \Pthr_n^t$.
The upper bound from Section \ref{sec:poly} implies that $\DI(\Pthr_n^t,q) \leq \DI(\Ppoly_n^{n-t+1},q) \leq {n \choose \leq n-t+1} = O(n^{n-t+1})$.


\paragraph*{Lower Bound.}

First of all, we have $\Pthr_n^t \Rightarrow \Pthr_{n-t+1}^1$, as we can map a set $S \subseteq [n-t+1]$ to $S \cup \{n-t+2,\ldots,n\}$.
Next we prove that $\DI(\Pthr_m^1,q) \geq 2^m/k$.

Let $F$ be any matrix representing $\Pthr_m^1$.
We show that $F$ is a triangular matrix with all entries on the main diagonal being non-zero.
Then the claim follows by Theorem \ref{thm:composite}.

Order the rows of $F$ by the increasing order of the size of the sets they correspond to.
Then order the columns of $F$ in such a way that the sets corresponding to the $i$-th row and the $i$-th column are the complements of each other.

As the complements don't overlap, the numbers on the main diagonal of $F$ are non-zero.
Now examine any entry on the $i$-th row and $j$-th column such that $i \geq j$.
Let $S$ correspond to the set of the $i$-th row and $T$ correspond to the set of the $j$-th column.
Since the columns are ordered by the decreasing size of the sets, we have that $|S| \geq m-|T|$, or equivalently $|S| + |T| \geq m$.

If $|S| + |T| > m$, then the sets must overlap and the value of $F_{i,j}$ is 0.
If $|S| + |T| = m$, then the only way $S$ and $T$ do not overlap is if $T$ is the complement of $S$.
In any case all the numbers below the main diagonal are 0, and non-zero on the main diagonal.
See Figure \ref{fig:matrix} for an example.

\begin{figure}[h]
\[
  \begin{blockarray}{ccccccccc}
    & \matindex{111} & \matindex{110} & \matindex{101} & \matindex{011} & \matindex{100} & \matindex{010} & \matindex{001} & \matindex{000} \\
    \begin{block}{c[cccccccc]}
       \matindex{000} & * & * & * & * & * & * & * & * \\
       \matindex{001} & 0 & * & 0 & 0 & * & * & 0 & * \\
       \matindex{010} & 0 & 0 & * & 0 & * & 0 & * & * \\
       \matindex{100} & 0 & 0 & 0 & * & 0 & * & * & * \\
       \matindex{011} & 0 & 0 & 0 & 0 & * & 0 & 0 & * \\
       \matindex{101} & 0 & 0 & 0 & 0 & 0 & * & 0 & * \\
       \matindex{110} & 0 & 0 & 0 & 0 & 0 & 0 & * & * \\
       \matindex{111} & 0 & 0 & 0 & 0 & 0 & 0 & 0 & *\\
    \end{block}
  \end{blockarray}
\]
\caption{An example of $F$ for $m = 3$. Stars represent non-zero elements.} \label{fig:matrix}
\end{figure}

\subsection{Disjunctions of Equality Tests}

We show that for prime $q$, we have $\DI(\Poreq_n^q,q) \leq 2^n$ and if $q$ is a product of $k$ distinct primes, then $\DI(\Poreq_n^q,q) \geq 2^n/k$.

\paragraph*{Upper bound.}
We prove that $\Ppoly_n^{n,q} \Rightarrow \Poreq_n^q$.
Examine a multilinear polynomial
\begin{equation*}
p_y(x) = \prod_{i=1}^n (x_i-y_i).
\end{equation*}
Clearly, $p_y(x) = 0 \bmod q$ iff at least one equality holds.
Therefore, if we map $x \mapsto x$ and $y \mapsto p_y$, then we have a correct reduction to multilinear polynomial evaluation.
By the upper bound from Section \ref{sec:poly}, we have $\DI(\Poreq_n^q,q) \leq \DI(\Ppoly_n^{n,q},q) \leq \sum_{i=0}^n {n \choose i} = 2^n$.

\paragraph*{Lower bound.}
We prove that $\Poreq_n^q \Rightarrow \Pneq_{2^n}$.
For the input $x, y \in [2^n]$ to $\Pneq_n$, map $x \mapsto \bin(x)$ and $y \mapsto \bin(y) \oplus 1^n$.
As $x \neq y$ iff there exists an $i$ such that $\bin(x)_i \neq \bin(y)_i$, we have that $x \neq y$ iff $\Poreq_n^q(\bin(x),\bin(y)\oplus 1^n) = 1$.
The lower bound follows by Section \ref{sec:neq}.

\section{Randomized Constructions} \label{sec:rand}

We can formulate the problem in the randomized setting as follows.
Let $P : \clX \times \clY \to \{0, 1\}$ be a predicate.
Consider all pairs of mappings $\clU = \{(x \mapsto \vx, y \mapsto \vy) \mid \vx, \vy \in \Z_q^\ell\text{ for some $\ell$}\}$.
These also include mappings that are incorrect inner product encodings of $P$.
Let $\mu$ be a probability distribution over $\clU$.
Then $\mu$ is a \emph{probabilistic inner product encoding} modulo $q$ with error $\epsilon$, if
\begin{equation*}
\Pr[P(x,y) \neq [\ip{\vx, \vy} = 0 \bmod q ] \mid (x \mapsto \vx, y \mapsto \vy) \sim \mu] \leq \epsilon.
\end{equation*}

We consider the length of the longest encoding under $\mu$ to be the length of $\mu$ and denote it by $\RI^{\mu}(P, q)$ (Randomized Inner product).
Then define
\begin{equation*}
\RI_{\epsilon}(P, q) = \min_\mu \RI^\mu (P, q),
\end{equation*}
where $\mu$ ranges over all probabilistic inner product encodings of $P$ modulo $q$ with error $\epsilon$.

Next we reproduce the definition of the probabilistic rank (over $\Z_q$) by Alman and Williams \cite{AW17}:
\begin{definition}[Probabilistic Matrix]
For $n, m \in \mathbb N$, define a \emph{probabilistic matrix} over $\Z_q$ to be a distribution of matrices $\clM \subset \Z_q^{n \times m}$.
A probabilistic matrix $\clM$ \emph{computes} a matrix $A \in \Z_q^{n \times m}$ with error $\epsilon > 0$ if for every entry $(i, j) \in [n] \times [m]$,
\begin{equation*}
\Pr_{M \sim \clM} [A_{i,j} \neq M_{i,j} ] \leq \epsilon.
\end{equation*}
\end{definition}
\begin{definition}[Probabilistic Rank]
Let $q$ be prime.
Then a probabilistic matrix $\clM$ has \emph{rank} $r$ if the maximum rank of an $M$ in support of $\clM$ is $r$.
Define the $\epsilon$-\emph{probabilistic rank} of a matrix $A \in \Z_q^{n \times m}$ to be the minimum rank of a probabilistic matrix computing $M$ with error $\epsilon$.
Denote it by $\rank_{\epsilon}(A)$.
\end{definition}

As we can see, the probabilistic choice of a distribution $\mu$ corresponds to a matrix $M$ sampled from $\clM$.
By a similar reasoning as in the proof of Theorem \ref{thm:main}, we have the following theorem:
\begin{theorem} \label{rand}
For any predicate $P$, prime $q \geq 2$ and error $\epsilon$, we have
\begin{equation*}
\RI_\epsilon(P, q) \leq \min_{F} \rank_\epsilon(F),
\end{equation*}
where $F$ is any matrix that represents $P$ modulo $q$.
\end{theorem}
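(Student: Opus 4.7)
The plan is to mimic the proof of Theorem~\ref{thm:main}, one sample at a time, replacing a single rank decomposition with a distribution over rank decompositions. Fix any matrix $F$ that represents $P$ modulo $q$ and take any probabilistic matrix $\clM$ of rank $r = \rank_\epsilon(F)$ that computes $F$ with error $\epsilon$. For each $M$ in the support of $\clM$ we have $\rank(M) \leq r$, so I can write $M = U_M V_M$ with $U_M \in \Z_q^{|\clX| \times r}$ and $V_M \in \Z_q^{r \times |\clY|}$; whenever $\rank(M)$ is strictly less than $r$ I would simply pad $U_M$ with zero columns and $V_M$ with zero rows so that every such decomposition has inner dimension exactly $r$.

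Next I would define the probabilistic inner product encoding $\mu$ by: sample $M \sim \clM$, and output the mapping $x \mapsto (U_M)_{x,*}^{\transpose}$, $y \mapsto (V_M)_{*,y}$. By construction all vectors lie in $\Z_q^r$, so $\RI^\mu(P,q) \leq r$. For any fixed pair $(x,y)$, the inner product of the sampled vectors is exactly $M_{x,y}$, and because $F$ represents $P$ we have $P(x,y) = [F_{x,y} = 0 \bmod q]$. Whenever the random event $M_{x,y} = F_{x,y}$ occurs, which holds with probability at least $1-\epsilon$ by the definition of $\clM$, the test $[\ip{\vx,\vy} = 0 \bmod q]$ matches $P(x,y)$. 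Thus the encoding error on $(x,y)$ is at most $\epsilon$, so $\mu$ is a valid probabilistic inner product encoding of error $\epsilon$ and length $r$. Taking the minimum over all representing matrices $F$ would yield the stated inequality.

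The argument is essentially a "one more layer of randomness" version of Theorem~\ref{thm:main}, and I do not expect any substantive obstacle. The only bookkeeping point worth flagging is the padding step, which relies on the convention that the rank of a probabilistic matrix is the \emph{maximum} rank in its support; this is precisely what lets us normalize every decomposition to the same inner dimension $r$ and therefore sample a single well-defined distribution over encodings of a common length. Beyond this, everything reduces to the entrywise error guarantee of $\clM$ combined with the observation that the inner product is computed entry by entry.
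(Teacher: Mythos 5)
Your proof is correct and takes essentially the same approach as the paper: take a probabilistic matrix $\clM$ computing a representing matrix $F$ with error $\epsilon$, and use the decomposition of each $M$ in the support to induce a distribution over inner product encodings of length at most $\rank_\epsilon(F)$. You simply spell out the padding convention and the entrywise error calculation that the paper's terser proof leaves implicit.
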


\begin{proof}
Let $F$ be any matrix that represents $P$ modulo $q$.
Suppose that $\clM$ is a probabilistic matrix that computes $F$.
Then any $M$ in support of $\clM$ defines an encoding of length $\rank(M)$ by the decomposition rank.
Therefore, there is a probability distribution over the encodings such that the maximum length is $\rank_{\epsilon}(F)$.
\end{proof}

For some predicates, the probabilistic rank can be much smaller than the deterministic rank.
Let $T(P)$ be a truth table of a predicate $P$ (defined by $T(P)_{x,y} = P(x,y)$).
The same authors prove that $\rank_{\epsilon}(T(\Peq_n)) = O(1/\epsilon)$ and $\rank_{\epsilon}(T(\Pleq_n)) = O((\log n)^2/\epsilon)$ (see Lemmas D.1 and D.2 in \cite{AW17}).
Since the matrix $T(P)$ represents the predicate $\lnot P$ (in our setting), these results imply that for any prime $q$:
\begin{enumerate}
\item $\RI_{\epsilon}(\Pneq_n,q) = O(1/\epsilon)$,
\item $\RI_{\epsilon}(\Pgt_n,q) = O((\log n)^2/\epsilon)$.
\end{enumerate}

We conclude by showing that these results immediately imply a constant length probabilistic encoding for $\Peq_n$ modulo any prime:
\begin{corollary}
For any prime $q$, we have
$\RI_{\epsilon}(\Peq_n,q) = O(1/\epsilon).$
\end{corollary}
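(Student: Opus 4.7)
The plan is to apply Theorem \ref{rand} with a well-chosen matrix $F$ that represents $\Peq_n$. The natural choice is $F = J_n - I_n$, whose $(x,y)$-entry equals $1$ when $x \neq y$ and $0$ when $x = y$; since in our setting the zero entries correspond to ``true'', this matrix does represent $\Peq_n$. (Note the flip: the matrix $I_n = T(\Peq_n)$ used by Alman--Williams represents $\Pneq_n$, not $\Peq_n$, in our convention, so a complementation is needed.) By Theorem \ref{rand} it suffices to show $\rank_\epsilon(J_n - I_n) = O(1/\epsilon)$.

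For this bound I would start from the Alman--Williams Lemma D.1, which supplies a probabilistic matrix $\clM$ with $\rank(M) = O(1/\epsilon)$ for every $M \in \supp(\clM)$ and $\Pr_{M \sim \clM}[M_{i,j} \neq (I_n)_{i,j}] \leq \epsilon$ entrywise. Define $\clM'$ by sampling $M \sim \clM$ and outputting $J_n - M$. Two observations then finish the argument: first, $\rank(J_n - M) \leq \rank(J_n) + \rank(M) = 1 + O(1/\epsilon) = O(1/\epsilon)$; second, $(J_n - M)_{i,j} \neq (J_n - I_n)_{i,j}$ happens exactly when $M_{i,j} \neq (I_n)_{i,j}$, so $\clM'$ computes $J_n - I_n$ with error at most $\epsilon$. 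Hence $\rank_\epsilon(J_n - I_n) = O(1/\epsilon)$, and Theorem \ref{rand} gives $\RI_\epsilon(\Peq_n, q) = O(1/\epsilon)$.

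I do not anticipate any real obstacle: the only conceptual point is recognizing that, in the probabilistic setting just as in the deterministic one, the matrix representing equality is the complement $J_n - I_n$ rather than $I_n$, after which the bound is a one-line perturbation of the AW17 construction using the trivial fact that $\rank(J_n) = 1$.
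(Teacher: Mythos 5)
Your proposal is correct and matches the paper's own argument: both take $F = J_n - T(\Peq_n)$ as the representing matrix, pass the Alman--Williams probabilistic matrix for $T(\Peq_n)$ through the affine shift $M \mapsto J_n - M$, and use $\rank(J_n)=1$ plus subadditivity of rank to bound the resulting probabilistic rank by $O(1/\epsilon)$ before invoking Theorem~\ref{rand}. No meaningful difference in route or content.
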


\begin{proof}
Let $\clM$ be a probabilistic matrix that computes $T(\Peq_n)$ with error $\epsilon$.
The matrix $F(\Peq_n) = J_n - T(\Peq_n)$ represents $\Peq_n$.
Therefore, the probabilistic matrix $J_n - \clM$ computes $F(\Peq_n)$ with error $\epsilon$.
Since $\rank(F(\Peq_n)) \leq 1 + \rank(T(\Peq_n))$, we have that $\RI(\Peq_n,q) = O(1/\epsilon)$.
\end{proof}

%
%

\section{Acknowledgments}

We thank Srijita Kundu, Swagato Sanyal and Alexander Belov for helpful discussions, and Krišjānis Prūsis for suggestions on the presentation.
We greatly thank Miklos Santha for hospitality during our stay at CQT, and Andris Ambainis for support.
We are grateful to the anonymous reviewers for suggestions and useful feedback, and for pointing out the precise lower bound for prime $q$ in Section \ref{sec:poly}.

\bibliography{hoeteck}

\begin{thebibliography}{GVW15}

\bibitem[AIK06]{AIK04}
Benny Applebaum, Yuval Ishai, and Eyal Kushilevitz.
\newblock Cryptography in {NC}{$^0$}.
\newblock {\em SIAM J. Comput.}, 36(4):845--888, 2006.

\bibitem[App11]{App11}
Benny Applebaum.
\newblock Randomly encoding functions: A new cryptographic paradigm - (invited
  talk).
\newblock In {\em ICITS}, pages 25--31, 2011.

\bibitem[AW17]{AW17}
Josh Alman and Ryan Williams.
\newblock Probabilistic rank and matrix rigidity.
\newblock In {\em Proceedings of the 49th Annual ACM SIGACT Symposium on Theory
  of Computing}, STOC 2017, pages 641--652, New York, NY, USA, 2017. ACM.

\bibitem[BDL13]{BDL13}
Abhishek Bhowmick, Zeev Dvir, and Shachar Lovett.
\newblock New bounds for matching vector families.
\newblock In {\em STOC}, pages 823--832, 2013.

\bibitem[BHR12]{BHR12}
Mihir Bellare, Viet~Tung Hoang, and Phillip Rogaway.
\newblock Foundations of garbled circuits.
\newblock In {\em ACM CCS}, 2012.
\newblock Also Cryptology ePrint Archive, Report 2012/265.

\bibitem[BW07]{BW07}
Dan Boneh and Brent Waters.
\newblock Conjunctive, subset, and range queries on encrypted data.
\newblock In {\em TCC}, pages 535--554, 2007.

\bibitem[DG15]{DG15}
Zeev Dvir and Sivakanth Gopi.
\newblock 2-server {PIR} with sub-polynomial communication.
\newblock In {\em STOC}, pages 577--584, 2015.

\bibitem[DGY11]{DGY11}
Zeev Dvir, Parikshit Gopalan, and Sergey Yekhanin.
\newblock Matching vector codes.
\newblock {\em {SIAM} J. Comput.}, 40(4):1154--1178, 2011.

\bibitem[DH13]{DH13}
Zeev Dvir and Guangda Hu.
\newblock Matching-vector families and {LDC}s over large modulo.
\newblock In {\em Approximation, Randomization, and Combinatorial Optimization.
  Algorithms and Techniques}, pages 513--526, Berlin, Heidelberg, 2013.
  Springer Berlin Heidelberg.

\bibitem[Efr12]{Efremenko12}
Klim Efremenko.
\newblock 3-query locally decodable codes of subexponential length.
\newblock volume~41, pages 1694--1703, 2012.

\bibitem[FKN94]{FKN94}
Uriel Feige, Joe Kilian, and Moni Naor.
\newblock A minimal model for secure computation.
\newblock In {\em STOC}, pages 554--563, 1994.

\bibitem[Gro00]{Grolmusz2000}
Vince Grolmusz.
\newblock Superpolynomial size set-systems with restricted intersections mod 6
  and explicit ramsey graphs.
\newblock {\em Combinatorica}, 20(1):71--86, 2000.

\bibitem[GVW15]{GVW15}
Sergey Gorbunov, Vinod Vaikuntanathan, and Hoeteck Wee.
\newblock Predicate encryption for circuits from {LWE}.
\newblock In {\em CRYPTO (2)}, pages 503--523, 2015.
\newblock Also, Cryptology ePrint Archive, Report 2015/029.

\bibitem[IK00]{IK00}
Yuval Ishai and Eyal Kushilevitz.
\newblock Randomizing polynomials: A new representation with applications to
  round-efficient secure computation.
\newblock In {\em FOCS}, pages 294--304, 2000.

\bibitem[KSW08]{KSW08}
Jonathan Katz, Amit Sahai, and Brent Waters.
\newblock Predicate encryption supporting disjunctions, polynomial equations,
  and inner products.
\newblock In {\em EUROCRYPT}, pages 146--162, 2008.

\bibitem[LV18]{LV18}
Tianren Liu and Vinod Vaikuntanathan.
\newblock Breaking the circuit-size barrier in secret sharing.
\newblock STOC 2018. Cryptology ePrint Archive, Report 2018/333, 2018.

\bibitem[LVW17]{LVW17}
Tianren Liu, Vinod Vaikuntanathan, and Hoeteck Wee.
\newblock Conditional disclosure of secrets via non-linear reconstruction.
\newblock In {\em Advances in Cryptology - {CRYPTO} 2017 - 37th Annual
  International Cryptology Conference, Santa Barbara, CA, USA, August 20-24,
  2017, Proceedings, Part {I}}, pages 758--790, 2017.

\bibitem[LVW18]{LVW18}
Tianren Liu, Vinod Vaikuntanathan, and Hoeteck Wee.
\newblock Towards breaking the exponential barrier for general secret sharing.
\newblock In {\em Advances in Cryptology - {EUROCRYPT} 2018 - 37th Annual
  International Conference on the Theory and Applications of Cryptographic
  Techniques, Tel Aviv, Israel, April 29 - May 3, 2018 Proceedings, Part {I}},
  pages 567--596, 2018.

\bibitem[PS03]{PSbook}
Manoj Prabhakaran and Amit Sahai.
\newblock {\em Secure Multi-Party Computation}.
\newblock IOS Press, 2003.

\bibitem[Yao82]{Yao82}
Andrew Chi-Chih Yao.
\newblock Theory and applications of trapdoor functions.
\newblock In {\em FOCS}, pages 80--91, 1982.

\bibitem[Yek08]{Yek08}
Sergey Yekhanin.
\newblock Towards 3-query locally decodable codes of subexponential length.
\newblock {\em J. {ACM}}, 55(1):1:1--1:16, 2008.

\end{thebibliography}

\end{document}